\newcommand{\K}{\mathit K}
\newcommand{\M}{\mathit M}
\newcommand{\nxt}{\mathbf X}
\newcommand{\fut}{\mathbf F}
\newcommand{\gbl}{\mathbf G}
\newcommand{\unt}{\mathbf U}
\newcommand{\rel}{\mathbf R}
\newcommand{\equ}{\; \Longleftrightarrow \;}
\newcommand{\la}{\langle}
\newcommand{\ra}{\rangle}
\newcommand{\et}{\mbox{ and }}
\newcommand{\st}{\mbox{ s.t. }}
\newcommand{\imp}{\; \Longrightarrow \;}
\newcommand{\sep}{\; | \;}
\newcommand{\Q}{\mathit Q}
\newfont{\amstoto}{msbm10}
\newcommand{\NN}{\mbox{\amstoto\char'116}}
\newcommand{\TF}{{\cal T(F)}}
\newcommand{\TFX}{{\cal T(F, X)}}
\newcommand{\TFQ}{{\cal T(F \cup \Q)}}
\newcommand{\Qf}{\Q_F}
\newcommand{\var}{{\cal V}ar}
\newcommand{\pos}{{\cal P}os}
\newcommand{\R}{\mathcal R}
\newcommand{\desc}{\R^*}
\newcommand{\T}{\mathcal T}
\newcommand{\X}{\mathcal X}
\newcommand{\vars}{\mathcal V}
\newcommand{\rw}{\rightarrow}
\newcommand{\trw}{\rightarrow^{\lambda}}
\newcommand{\lrw}{\longrightarrow}
\newcommand{\arw}{\dashrightarrow}
\newcommand{\rwne}{\rightarrow^{\not\varepsilon}}
\newcommand{\Deps}{\Delta_{\varepsilon}}
\newcommand{\norm}{\downarrow}
\newcommand{\A}{\mathit A}
\newcommand{\B}{\mathit B}
\newcommand{\F}{\mathcal F}
\newcommand{\Lang}[1]{\mathcal{L}_{#1}}
\newcommand{\f}{\rw}
\newcommand{\aaex}{{\it A}_{\R}}
\newtheorem{theorem}{Theorem}
\newtheorem{lemma}{Lemma}
\newtheorem{definition}{Definition}
\newtheorem{property}{Property}
\newtheorem{example}{Example}
\newcommand{\titre}{Verifying Temporal Regular Properties of
  Abstractions of Term Rewriting Systems}
\title{\titre}
\author{Benoît Boyer
\institute{Université Rennes 1, France}
\email{Benoit.Boyer@irisa.fr}
\and
Thomas Genet
\institute{Université Rennes 1, France}
\email{Thomas.Genet@irisa.fr}
}
\begin{document}
\maketitle

\begin{abstract}
  The tree automaton completion is an algorithm used for proving safety
  properties of systems that can be modeled by a term rewriting system. This
  representation and verification technique works well for proving properties of
  infinite systems like cryptographic protocols or more recently on Java
  Bytecode programs. This algorithm computes a tree automaton which represents
  a (regular) over approximation of the set of reachable terms by rewriting
  initial terms. This approach is limited by the lack of information about
  rewriting relation between terms. Actually, terms in relation by rewriting are
  in the same equivalence class: there are recognized by the same state in the
  tree automaton.

  Our objective is to produce an automaton embedding an 
  abstraction of the rewriting relation sufficient to prove temporal
  properties of the term rewriting system.

  We propose to extend the algorithm to produce an automaton having more
  equivalence classes to distinguish a term or a subterm from its successors w.r.t. rewriting. 
  While ground transitions are used to recognize equivalence classes of terms,
  $\epsilon$-transitions represent the rewriting relation between terms.
  From the completed automaton, it is possible to automatically build a
  Kripke structure abstracting the rewriting sequence.
  States of the Kripke structure are states of the tree automaton and the
  transition relation is given by the set of $\epsilon$-transitions.
  States of the Kripke structure are labelled by the set of terms recognized
  using ground transitions. On this Kripke structure, we define the Regular
  Linear Temporal Logic (R-LTL) for expressing properties. Such properties can then 
  be checked using standard model checking algorithms. The only
  difference between LTL and R-LTL is that predicates are replaced by
  regular sets of acceptable terms.  
\end{abstract}

\section{Introduction}


Our main objective is to formally verify programs or systems modeled
using Term Rewriting Systems.  In a previous
work~\cite{BoichutGJL-RTA07}, we have shown that it is possible to
translate a Java bytecode program into a Term Rewriting System (TRS).
In this case, terms model Java Virtual Machine (JVM) states and the
execution of bytecode instructions is represented by rewriting,
according to the small-step semantics of Java. An interesting point of
this approach is the possibility to classify rewriting rules. More
precisely, there is a strong relation between the position of
rewriting in a term and the semantics of the executed transition on
the corresponding state. For the case of Java bytecode, since a term
represents a JVM state, rewriting at the top-most position corresponds
to manipulations of the call stack, i.e. it simulates a method call or
method return.  On the other hand, since the left-most subterm
represents the execution context of the current method (so called
frame), rewriting at this position simulates the execution of the code
of {\em this} method. Hence, by focusing on rewriting at a particular
position, it is possible to analyse a Java program at the method call
level (inter procedural control flow) or at the instruction level
(local control flow).
The contribution of this paper is dual. First, we propose an abstract rewriting
relation to characterize the rewriting paths at a particular depth in terms.
Second, we propose an algorithm which builds a tree automaton recognizing this
relation between terms. Thus, it is possible for instance to build
a tree automaton recognizing the graph of method calls by abstracting the
rewriting relation for the top-most position of JVM terms.

The verification technique used in~\cite{BoichutGJL-RTA07}, called Tree Automata
Completion~\cite{FeuilladeGVTT-JAR04}, is able to finitely over-approximate the
set of reachable terms, i.e. the set of all reachable states of the
JVM. However, this technique lacks precision in the sense that it makes no
difference between all those reachable terms. Due to the approximation
algorithm, all reachable terms are considered as equivalent and the execution
ordering is lost. In particular, this prevents to prove temporal properties of such models. 
However, using approximations makes it possible to prove unreachability
properties of infinite state systems.

In this preliminary work, we propose to improve the Tree Automata Completion
method so as to prove temporal properties of a TRS representing a finite state
system. The first step is to refine the algorithm so as to produce a tree
automaton keeping an approximation of the rewriting relation between
terms. Then, in a second step, we propose a way to check LTL-like formulas on
this tree automaton.

\section{Preliminaries}

Comprehensive surveys can be found in~\cite{BaaderN-book98} for
rewriting, and in~\cite{TATA,GilleronTison-FI95} for tree automata
and tree language theory.

Let $\F$ be a finite set of symbols, each associated with an arity function, and
let $\X$ be a countable set of variables. $\TFX$ denotes the set of terms, and
$\TF$ denotes the set of ground terms (terms without variables). The set of
variables of a term $t$ is denoted by $\var(t)$. A substitution is a function
$\sigma$ from $\X$ into $\TFX$, which can be uniquely extended to an
endomorphism of $\TFX$. A position $p$ for a term $t$ is a word over $\NN$. The
empty sequence $\lambda$ denotes the top-most position. The set $\pos(t)$ of
positions of a term $t$ is inductively defined by:
\begin{itemize}
\item $\pos(t)= \{ \lambda\} $ if $t \in \X$
\item $\pos(f(t_1,\dots,t_n)) = \{ \lambda \} \cup \{i.p \mid 1 \leq i \leq n
  \et p \in \pos(t_i) \}$
\end{itemize}
If $p \in \pos(t)$, then $t|_p$ denotes the subterm of $t$ at position $p$ and
$t[s]_p$ denotes the term obtained by replacement of the subterm $t|_p$ at
position $p$ by the term $s$. A term rewriting system (TRS) $\R$ is a set of {\em
  rewrite rules} $l \rw r$, where $l, r \in \TFX$, $l \not \in \X$, and $\var(l)
\supseteq \var(r)$.
The TRS $\R$ induces a rewriting relation $\rw_{\R}$ on terms as follows. Let
$s, t\in \TFX$ and $l \rw r \in \R$, $s \rw^p_{\R} t$ denotes that there exists a
position $p\in\pos(t)$ and a substitution $\sigma$ such that $s|_p= l\sigma$ and
$r=s[r\sigma]_p$. Note that the rewriting position $p$ can generally be omitted,
i.e. we generally write $s \rw_{\R} t$. The reflexive transitive closure of
$\rw_{\R}$ is denoted by $\rw^*_{\R}$. The set 
of $\R$-descendants of a set of ground terms $E$ is $\desc(E) = \{t
\in \TF \sep \exists s \in E \st s \rw^*_{\R} t \}$.

The {\em verification technique} defined
in~\cite{Genet-RTA98,FeuilladeGVTT-JAR04} is based on the approximation of $\desc(E)$.
Note that $\desc(E)$ is possibly infinite: $\R$ may not terminate
and/or $E$ may be infinite. The set $\desc(E)$ is generally not
computable~\cite{GilleronTison-FI95}. However, it is possible to
over-approximate it~\cite{Genet-RTA98,FeuilladeGVTT-JAR04,Takai-RTA04}
using tree automata, i.e. a finite representation of infinite
(regular) sets of terms.  In this verification setting, the TRS $\R$
represents the system to verify, sets of terms $E$ and $Bad$ respectively 
represent the set of initial configurations and the set of ``bad''
configurations that should not be reached. Using tree automata
completion, we construct a tree automaton $\B$ whose language
$\Lang{}(\B)$ is such that $\Lang{}(\B) \supseteq \desc(E)$. If
$\Lang{}(\B)\cap Bad = \emptyset$ then this proves that $\desc(E)\cap
Bad=\emptyset$, and thus that none of the ``bad'' configurations is
reachable. We now define tree automata.

Let $\Q$ be a finite set of symbols, with arity $0$, called {\em states} such
that $\Q \cap \F= \emptyset$.  $\TFQ$ is called the set of {\em configurations}.
\begin{definition}[Transition, normalized transition, $\varepsilon$-transition]
  \label{def:normalized}
  A {\em transition} is a rewrite rule $c \f q$, where $c$ is a
  configuration i.e. $c \in \TFQ$ and $q \in \Q$. A {\em normalized
    transition} is a transition $c \f q$ where $c = f(q_1, \ldots,
  q_n)$, $f \in \F$ whose arity is $n$, and $q_1, \ldots, q_n \in \Q$.
  An {\em $\varepsilon$-transition} is a transition of the form  $q \f q'$ where $q$ and $q'$ are states. 
\end{definition}

\begin{definition}[Bottom-up nondeterministic finite tree automaton]
  A bottom-up nondeterministic finite tree automaton (tree automaton for short)
  is a quadruple $\A= \langle \F, \Q, \Q_F,\Delta \cup \Deps \rangle$, where $\Q_F
  \subseteq \Q$, $\Delta$ is a set of normalized transitions
  and $\Deps$ is a set of $\varepsilon$-transitions.
\end{definition}

The {\em rewriting relation} on $\TFQ$ induced by the transitions of $\A$ (the
set $\Delta \cup \Deps$) is denoted by $\f_{\Delta\cup\Deps}$.  When $\Delta$ is
clear from the context, $\f_{\Delta\cup\Deps}$ will also be denoted by
$\f_{\A}$. We also introduce $\rwne_\A$ the \emph{transitive relation} which is induced by the set
$\Delta$ alone.


\begin{definition}[Recognized language, canonical term]
  The tree language recognized by $\A$ in a state $q$ is $\Lang{}(\A,q) = \{t \in \TF \sep t \f^*_{\A} q \}$.
  The language recognized by $\A$ is $\Lang{}(\A) = \bigcup_{q \in \Q_F} \Lang{}(\A, q)$. A tree language is regular if
  and only if it can be recognized by a tree automaton.
  A term $t$ is a {\em canonical term} of the state $q$, if $t \rwne_\A q$.
\end{definition}

\begin{example}
   Let $\A$ be the tree automaton $\langle \F, \Q, \Q_F, \Delta \rangle$ such
   that $\F=\{f,g,a\}$, $\Q= \{q_0, q_1, q_2\}$, $\Q_F=\{q_0\}$,  $\Delta= \{f(q_0)
   \rw q_0, g(q_1) \rw q_0, a \rw q_1, b \rw q_2 \}$ and
   $\Delta_{\epsilon}=\{q_2 \rw q_1 \}$. In $\Delta$, transitions are {\em
     normalized}. A transition of the form $f(g(q_1)) \f q_0$ is not
   normalized. The term $g(a)$ is a term of $\TFQ$ (and of $\TF$) and can be
   rewritten by $\Delta$ in the following way: $g(a) \rwne_\A g(q_1)
   \rwne_\A q_0$. Hence $g(a)$ is a canonical term of $q_1$. Note also that
   $b \rw_\A q_2 \rw_\A q_1$. Hence, $\Lang{}(\A, q_1)=
   \{a, b\}$ and $\Lang{}(\A)=\Lang{}(\A, 
   q_0) = \{g(a), g(b),f(g(a)), f(f(g(b))),\ldots\}=\{f^*(g([a|b]))\}$.
 \end{example}


\section{The Tree Automata Completion with $\varepsilon$-transitions}



Given a tree automaton $\A$ and a TRS $\R$, the tree automata completion
algorithm, proposed in~\cite{Genet-RTA98,FeuilladeGVTT-JAR04}, computes a \emph{tree complete
automaton} $\aaex^*$ such that $\Lang{}(\aaex^*)=\desc(\Lang{}(\A))$ when it is
possible (for some of the classes of TRSs where an exact computation is
possible, see~\cite{FeuilladeGVTT-JAR04}), and such that $\Lang{}(\aaex^*)
\supseteq \desc(\Lang{}(\A))$ otherwise. 
In this paper, we only consider the exact case.

The tree automata completion with $\varepsilon$-transtions works as follow.
From $\A=\aaex^0$ completion builds a sequence $\aaex^0.\aaex^1\ldots\aaex^k$ of automata such that if
$s\in\Lang{}(\aaex^i)$ and $s\f_{\R} t$ then $t\in\Lang{}(\aaex^{i+1})$. Transitions of $\aaex^i$ are denoted by the set
$\Delta^i \cup \Deps^i$. Since for every tree automaton, there exists a
deterministic tree automaton recognizing the same language, we can assume
that initially $A$ has the following properties:

\begin{property}[$\rwne$ deterministic]
  \label{prop:deterministic}
  If $\Delta$ contains two normalized transitions of the form 
  $f(q_1, \dots, q_n) \rw q$ and $f(q_1, \dots, q_n) \rw q'$, it means $q = q'$. 
  This ensures that the rewriting relation $\rwne$ is deterministic.
\end{property}

\begin{property}
  \label{prop:wellinitial}
  For all state $q$ there is at most one normalized transition $f(q_1, \dots, q_n) \rw q$
  in $\Delta$. This ensures that if we have $t \rwne q$ and $t' \rwne q$ then $t = t'$.
\end{property}

If we find a fixpoint automaton $\aaex^k$ such that $\desc(\Lang{}(\aaex^k)) =
\Lang{}(\aaex^k)$, then we note $\aaex^*=\aaex^k$ 
and we have $\Lang{}(\aaex^*) \supseteq \desc(\Lang{}(\aaex^0))$~\cite{FeuilladeGVTT-JAR04}.
To build $\aaex^{i+1}$ from $\aaex^{i}$, we achieve a \textit{completion step}
which consists of finding \textit{critical pairs} between $\f_{\R}$ and
$\f_{\aaex^i}$. To define the notion of critical pair, we extend the definition
of substitutions to the terms of $\TFQ$. For a substitution $\sigma:\X\mapsto\Q$ and
a rule $l\f r \in \R$, a critical pair is an instance $l\sigma $ of $l$ such
that there exists $q\in\Q$ satisfying $l\sigma \f^*_{\aaex^i}q$ and $l\sigma
\f_{\R} r\sigma$. Note that since
$\R$, $\aaex^i$ and the set $\Q$ of states of $\aaex^i$ are finite, there is only a finite
number of critical pairs. For every critical pair detected between $\R$ and
$\aaex^i$ such that we do not have a state $q$' for which $r\sigma \rwne_{\aaex^i}q'$ and $q' \rw q \in \Deps^i$, the
tree automaton $\aaex^{i+1}$ is constructed by adding new transitions $r\sigma \rwne q'$ to $\Delta^i$
and $q' \rw q$ to $\Deps^i$ such that $\aaex^{i+1}$ recognizes $r\sigma$ in $q$, i.e. $r\sigma \f^*_{\aaex^{i+1}} q$, see
Figure~\ref{fig:cp}.
\begin{figure}[!ht]
  {\small
    \[
    \xymatrix{
      l\sigma \ar[r]_-{\R}\ar[d]^-{*}_-{\aaex^i} & r\sigma \ar[d]_-{\not\varepsilon}^{\aaex^{i+1}}\\
      q & q' \ar[l]^-{\aaex^{i+1}}
    }
    \]}
  \vspace*{-7mm}
  \caption{\footnotesize A critical pair solved \label{fig:cp}
  }
\end{figure}
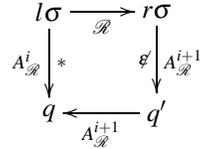
It is important to note that we consider the critical pair only if the
last step of the reduction $l\sigma \f^*_{\aaex^i}q$, is the last step of rewriting is not a $\varepsilon$-transition.
Without this condition, the completion computes the transitive closure of the
expected relation $\Deps$, and thus looses precision. 
The transition $r \sigma \f q'$ is not necessarily a normalized
transition of the form $f(q_1, \ldots, q_n) \f q'$ and so it has to be normalized
first. Instead of adding $r\sigma \rw q'$ we add $\norm(r\sigma \rw q')$ to
transitions of $\Delta^i$.
Here is the $\norm$ function used to normalize transitions. Note that, in 
this function, transitions are normalized using new states of $\Q_{new}$.

\begin{definition} [$\norm$] Let $\A=\la \F, \Q, \Qf, \Delta\cup\Deps\ra$ be a tree automaton, $\Q_{new}$ a
  set of {\em new} states such that $\Q\cap \Q_{new} = \emptyset$, $s \in \TFQ$ and $q'\in
  \Q$.
  The normalization of the transition $s \rw q'$ is done in two mutually inductive steps.
  The first step denoted by $\norm(s \rw q'\sep\Delta)$, we rewrite $s$ by $\Delta$ until rewriting 
  is impossible: we obtain a unique configuration $t$ if $\Delta$ respects the property~\ref{prop:deterministic}.
  The second step $\norm'$ is inductively defined by:
  \begin{itemize}
  \item
    $\norm'(f(t_1, \ldots, t_n) \rw q\sep\Delta)= \Delta \cup \{f(t_1, \ldots,
    t_n) \rw q\}$ if $\forall i = 1\ldots n:\ t_i \in \Q$
  \item 
    $\norm'(f(t_1, \ldots, t_n) \rw q \sep \Delta)= \norm(f(t_1, \ldots, q_i,\ldots, t_n) \rw q\sep \norm'(t_i\rw q_i\sep \Delta)\ )$
    where $t_i$ is subterm s.t. $t_i \in \TFQ\setminus \Q$ and $q_i \in \Q_{new}$.
  \end{itemize}
\end{definition}

 \begin{lemma}
   \label{lem:welldefined}
   If the property \ref{prop:deterministic} holds for $\aaex^i$ then it holds also for $\aaex^{i+1}$.
 \end{lemma}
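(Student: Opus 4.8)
The plan is to notice that $\aaex^{i+1}$ differs from $\aaex^i$ only through the transitions installed by a completion step that solves a critical pair $l\sigma \f_\R r\sigma$. Among those, the $\varepsilon$-transition $q' \f q$ is added to $\Deps^i$ and is irrelevant here, since Property~\ref{prop:deterministic} constrains only the normalized set $\Delta$. Thus $\Delta^{i+1} = \norm(r\sigma \f q' \sep \Delta^i)$, and it suffices to prove that $\norm$ maps a $\Delta$ satisfying Property~\ref{prop:deterministic} to one that still satisfies it. (Several critical pairs are handled one after another with fresh states, so treating a single resolution and iterating is enough.)

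First I would prove, by the mutual induction underlying the definitions of $\norm$ and $\norm'$, the reinforced invariant: given a deterministic $\Delta$, the call $\norm(s \f p \sep \Delta)$ (resp.\ $\norm'(c \f p \sep \Delta)$) is well defined, returns a deterministic set, and every transition it appends has a left-hand side that is $\rwne$-irreducible in the $\Delta$ to which it is appended. The reinforcement is what drives the whole argument: a new normalized transition $f(q_1,\dots,q_n) \f p$ can violate determinism only if $\Delta$ already contains $f(q_1,\dots,q_n) \f p''$ with $p'' \neq p$, i.e.\ only if its left-hand side is $\rwne$-reducible; so the irreducibility clause yields determinism of the union for free. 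Note also that the induction simultaneously justifies well-definedness, since the leading reduction step of $\norm$ produces a \emph{unique} normal form precisely because the current $\Delta$ is deterministic.

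For the base case of $\norm'$, where $f(t_1,\dots,t_n) \f p$ with all $t_i \in \Q$ is appended, I would use that this configuration is reached only after the leading reduction step of a $\norm$ call has rewritten its argument to an $\rwne$-normal form w.r.t.\ the current $\Delta$; by definition of a normal form, $f(t_1,\dots,t_n)$ then carries no outgoing transition in $\Delta$, which is exactly the required irreducibility. For the recursive case, the fresh state $q_i \in \Q_{new}$ is disjoint from every state occurring in $\Delta$, so the inner call $\norm'(t_i \f q_i \sep \Delta)$ appends transitions targeting only new states and, by the induction hypothesis, keeps $\Delta$ deterministic; the subsequent outer call $\norm(f(\dots,q_i,\dots) \f p \sep \dots)$ again re-reduces to a normal form before appending anything, so the invariant propagates. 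A point worth checking here is that repeated occurrences of the same subterm are shared rather than duplicated: once $t_i \f q_i$ is installed, the leading reduction of the next $\norm$ call rewrites every further occurrence of $t_i$ to the same $q_i$, so no competing transition with an identical left-hand side but a different target is ever created within a single normalization.

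The main obstacle is precisely this base-case irreducibility claim, as it is the only place where a genuinely new normalized transition is committed and where a clash with $\Delta^i$ could arise; everything else is bookkeeping threaded through the mutual induction (propagating determinism along the recursive calls and exploiting the freshness of $\Q_{new}$ to rule out collisions with old states). Once the invariant holds, instantiating it with $s = r\sigma$, $p = q'$ and $\Delta = \Delta^i$ gives that $\Delta^{i+1}$ satisfies Property~\ref{prop:deterministic}, which is the statement of Lemma~\ref{lem:welldefined}.
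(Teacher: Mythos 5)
Your proposal is correct and follows essentially the same route as the paper's proof: the paper's argument is exactly that $\norm$ reduces the configuration to a $\Delta$-normal form before committing any transition, so a clash $f(q_1,\dots,q_n)\rw q$ versus an existing $f(q_1,\dots,q_n)\rw q''$ would contradict irreducibility, and that re-rewriting the remaining subterms after each inner $\norm'$ call prevents duplicate left-hand sides within one normalization. Your explicit ``irreducibility of appended left-hand sides'' invariant threaded through the mutual induction is just a more formal packaging of the same idea.
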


 \begin{proof}[Intuition]
   The determinism of $\rwne$ is preserved by $\Delta$, since when a new set of transitions
   is added to $\Delta$ for a subterm $t_i$, we rewrite all other subterms $t_j$ with the new $\Delta$ until rewriting is impossible 
   before resuming the normalization. Then, if we try to add to $\Delta$ a transition $f(q_1, \dots, q_n) \rw q$
   though there exists a transition $f(q_1, \dots, q_n) \rw q'\in \Delta$, it means that the configuration $f(q_1, \dots, q_n)$ 
   can be rewritten by $\Delta$. This is a contradiction : when we resume the normalization all subterms $t_i$ can not be rewritten 
   by the current $\Delta$. So, we never add a such transition to $\Delta$. The normalization produces a new set of transitions $\Delta$
   that preserves the property \ref{prop:deterministic}.
 \end{proof}

It is very important to remark that the transition $q'\rw q$ in Figure~\ref{fig:cp}
creates an order between the language recognized by $q$ and the one recognized by
$q'$.  Intuitively, we know that for all substitution $\sigma' : \X \rw \TF$ such that $l\sigma'$ is
a term recognized by $q$, it is rewritten by $\R$ into a canonical term ($r\sigma'$) of $q'$.
By duality, the term $r\sigma'$ has a parent ($l\sigma'$) in the state $q$.
Extending this reasoning, $\Deps$ defines a relation between canonical
terms. This relation follows rewriting steps at the top position and forgets
rewriting in the subterms.

\begin{definition}[$\arw$]
  Let $\R$ be a TRS. For all terms $u$ $v$, we have $u \arw_{\R} v$ iff there exists
  $w$ such that $u \rw_\R^* w$, $w \trw_{\R} v$ and there is not
  rewriting on top position $\lambda$ on the sequence denoted by $u
  \rw_\R^* w$.
  
\end{definition}

In the following, we show that the completion builds a tree automaton where
the set $\Deps$ is an \emph{abstraction} $\arw_{\R_i}$ of the rewriting relation $\rw_\R$, for
any relevant set $\R_i$.

\begin{theorem}[Correctness]
  \label{thm:correct}
  Let be $\aaex^*$ a complete tree automaton 
  such that $q'\rw q$ is a $\varepsilon$-transition of $\aaex^*$. Then, for all canonical
  terms $u$ $v$ of states $q$ and $q'$ respectively s.t. $q'\rw q$, we have :
  \vspace*{-5mm}
  
  \[\xymatrix{ 
    u \ar[d]^-{\not\varepsilon}_-{\aaex^*} \ar@{-->}[r]_-{\R}
                &v \ar[d]^-{\not\varepsilon}_-{\aaex^*}\\ 
    q &q' \ar[l] }
  \]
\end{theorem}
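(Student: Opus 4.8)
The plan is to prove the statement by induction on the completion steps, reducing it to the single step at which the $\varepsilon$-transition $q'\rw q$ is introduced, and to support this induction with an auxiliary lemma relating reductions of $\aaex^*$ to the rewriting relation $\rw_\R$. First I would fix the $\varepsilon$-transition $q'\rw q$ and recall that it is created at some step $i$ to close a critical pair: there are a rule $l\rw r\in\R$ and a substitution $\sigma:\X\rw\Q$ with $l\sigma\f^*_{\aaex^i}q$ (whose last step is not an $\varepsilon$-transition, by the completion condition), $l\sigma\rw_\R r\sigma$ at the top position $\lambda$, and $r\sigma\rwne q'$ in $\aaex^{i+1}$. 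I then lift $\sigma$ to a ground substitution $\sigma':\X\rw\TF$ by taking $\sigma'(x)$ to be the canonical term of $\sigma(x)$, which exists and is unique by Properties~\ref{prop:deterministic} and~\ref{prop:wellinitial}.

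With this notation the target relation becomes explicit: the witness $w$ for $u\arw_\R v$ will be $l\sigma'$, and the single top rewrite will be the ground instance $l\sigma'\trw_\R r\sigma'$ of $l\rw r$. Two identifications are needed. For the right state, since $\sigma'(x)\rwne\sigma(x)$ we obtain $r\sigma'\rwne r\sigma\rwne q'$, so by uniqueness of canonical terms $v=r\sigma'$. For the left state, $l\sigma'\rwne l\sigma\f^*_{\aaex^i}q$ is again a reduction ending with a non-$\varepsilon$ step, so it only remains to show that the canonical term $u$ of $q$ rewrites to $l\sigma'$ without ever rewriting at the top. This last point is exactly the content of the auxiliary lemma applied to $l\sigma'\f^*_{\aaex^i}q$.

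The auxiliary lemma I would establish reads: for every ground term $t$ and state $q$ with $t\f^*_{\aaex^*}q$, the canonical term $u$ of $q$ satisfies $u\rw^*_\R t$; moreover, if the last step of the reduction is not an $\varepsilon$-transition, then no rewrite in $u\rw^*_\R t$ occurs at the top position $\lambda$. Granting it, the theorem follows from the previous paragraph by chaining $u\rw^*_\R l\sigma'\trw_\R r\sigma'=v$, which is precisely $u\arw_\R v$. I would prove the lemma by a nested induction: an outer induction on the completion step carried simultaneously with the theorem itself (so that, when a reduction uses an $\varepsilon$-transition, I may invoke the already-established instance of the theorem for that transition), and an inner induction on the length of the reduction. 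If the last step is the normalized transition $f(q_1,\dots,q_n)\rwne q$, then $t=f(t_1,\dots,t_n)$ with $t_j\f^*_{\aaex^*}q_j$, the canonical term of $q$ is $f(u_1,\dots,u_n)$ with $u_j$ canonical for $q_j$, and the inner hypothesis on each $t_j$ gives rewrites $u_j\rw^*_\R t_j$ contained in the subterms, whence $u\rw^*_\R t$ with no top rewrite. If the last step is an $\varepsilon$-transition $q''\rw q$, then $t\f^*_{\aaex^*}q''$ is shorter, the inner hypothesis gives $u_{q''}\rw^*_\R t$, and the theorem instance for $q''\rw q$ gives $u\rw^*_\R w'\trw_\R u_{q''}$; concatenation yields $u\rw^*_\R t$ (no top-freeness being claimed, since the last step was an $\varepsilon$-transition).

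The main obstacle is organizing this nested induction so that it is not circular: the instance of the theorem for $q'\rw q$ is derived from the auxiliary lemma on the earlier automaton $\aaex^i$, whereas the lemma on $\aaex^{i+1}$ consumes the theorem for the $\varepsilon$-transitions present after step $i+1$. Carrying both statements together, and establishing the new $\varepsilon$-transition's instance of the theorem \emph{before} treating reductions that may use it, resolves the dependency. The remaining care concerns canonical terms: I must rely on Properties~\ref{prop:deterministic} and~\ref{prop:wellinitial} and their preservation (Lemma~\ref{lem:welldefined}) to guarantee that each state has a unique canonical term and that the canonical terms of already-existing states are not disturbed when fresh states and transitions are added during normalization.
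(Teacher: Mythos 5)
Your proposal follows essentially the same route as the paper: the same auxiliary lemma (that every canonical term of a state $\R$-rewrites to every term recognized in that state, proved by induction on completion steps), and the same final step of instantiating the critical pair $l\sigma \f^*_{\aaex^*} q$, $r\sigma \rwne q'$ with a ground substitution built from canonical terms. Your only deviation is to fold the ``no rewriting at the top position when the last automaton step is not an $\varepsilon$-transition'' clause into the auxiliary lemma rather than arguing it inline as the paper does, which is a reasonable tightening of the same argument.
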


First, we have to prove that the property \ref{prop:deterministic} is preserved by completion.
To prove theorem \ref{thm:correct}, we need a stronger lemma.

\begin{lemma}[]
  \label{lem:correct}
  Let be $\aaex^*$ a complete tree automaton, $q$ a state of $\aaex^*$ and $v\in\Lang{}(\aaex^*,q)$.
  Then, for all canonical term $u$ of $q$, we have $u \rw_\R^* v$. 
\end{lemma}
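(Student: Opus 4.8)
The plan is to prove the statement not just for the final automaton but, by induction on the completion step $i$, for every $\aaex^i$ simultaneously: for each state $q$ and each $v\in\Lang{}(\aaex^i,q)$, the canonical term $u$ of $q$ satisfies $u\rw_\R^* v$. Since $\aaex^*=\aaex^k$ for some $k$, this yields the lemma. Throughout I use that new transitions are normalized into fresh states, so by Property~\ref{prop:wellinitial} the canonical term of a state is unchanged from the step at which the state is created onwards, and that $\Lang{}(\aaex^i,q)\subseteq\Lang{}(\aaex^{i+1},q)$. For the base case I take $\aaex^0$ to be $\varepsilon$-free, so every $v\in\Lang{}(\aaex^0,q)$ reduces to $q$ by $\Delta$ alone and uniqueness of canonical terms forces $v=u$.

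For the inductive step I would first put the run $v\f^*_{\aaex^{i+1}}q$ into the standard bottom-up shape: writing $v=f(v_1,\dots,v_n)$, there are states $p_1,\dots,p_n$ with $v_i\f^*_{\aaex^{i+1}}p_i$, a normalized transition $f(p_1,\dots,p_n)\f q_0$, and a chain of $\varepsilon$-transitions $q_0\f^*_{\Deps}q$. An inner structural induction on $v$ gives $u_i\rw_\R^* v_i$ for the canonical term $u_i$ of each $p_i$; setting $u_0=f(u_1,\dots,u_n)$, which is the canonical term of $q_0$ since $f(u_1,\dots,u_n)\rwne f(p_1,\dots,p_n)\f q_0$, closure of $\rw_\R$ under contexts yields $u_0\rw_\R^* v$. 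It then remains to climb the $\varepsilon$-chain, i.e. to show for a single $\varepsilon$-transition $s\f t$, with canonical terms $u_s$ of $s$ and $u_t$ of $t$, that $u_t\rw_\R^* u_s$; chaining these transitions and composing with $u_0\rw_\R^* v$ finishes the step.

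The crux is this single-$\varepsilon$-transition claim. If $s\f t$ already belongs to $\aaex^i$, then $u_s\rwne s\f t$ shows $u_s\in\Lang{}(\aaex^i,t)$, and the step-$i$ hypothesis gives $u_t\rw_\R^* u_s$ directly. If instead $s\f t$ is the transition $q'\f q$ added at step $i+1$, I would exploit its critical-pair origin: there are $\sigma:\X\to\Q$ and a rule $l\f r$ with $l\sigma\f^*_{\aaex^i}q$, $l\sigma\rw_\R r\sigma$ and $r\sigma\rwne q'$. Let $\theta$ send each variable $x$ to the canonical term of $\sigma(x)$. Then $r\theta\rwne q'$ makes $r\theta$ the canonical term $u_{q'}$ of $q'$, while $l\theta\rwne l\sigma\f^*_{\aaex^i}q$ places $l\theta$ in $\Lang{}(\aaex^i,q)$. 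The step-$i$ hypothesis now gives $u_q\rw_\R^* l\theta$, and since $l\theta\rw_\R r\theta$ is just the rule instantiated by $\theta$, we conclude $u_q\rw_\R^* l\theta\rw_\R r\theta=u_{q'}$, as required.

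The main obstacle is organising the well-founded induction so that this last argument is not circular: the term $l\theta$ is recognised at $q$ and one is tempted to invoke the very lemma being proved, but the decisive point is that the run $l\sigma\f^*_{\aaex^i}q$ witnessing the critical pair lives entirely in the earlier automaton $\aaex^i$, so only the hypothesis for step $i$, never that for $i+1$, is used. Making this precise relies on the requirement that the last step of $l\sigma\f^*_{\aaex^i}q$ is not an $\varepsilon$-transition, so that the critical pair is genuinely detected in $\aaex^i$, together with the preservation of canonical terms across completion steps; these two facts are what keep the recursion grounded. I expect the merely routine verifications to be the bottom-up normal form of automaton runs and the claim that normalizing $r\sigma$ into fresh states makes $r\theta$ exactly the canonical term of $q'$.
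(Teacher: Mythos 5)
Your proposal is correct and follows essentially the same route as the paper's proof sketch: an outer induction on completion steps, a base case resting on the absence of $\varepsilon$-transitions in $\aaex^0$ together with Property~\ref{prop:wellinitial}, an inner induction on term structure, and the critical-pair argument instantiated with the canonical-term substitution $\theta$ to handle a newly added $\varepsilon$-transition $q'\rw q$ via $u_q \rw_\R^* l\theta \rw_\R r\theta = u_{q'}$. The only difference is organisational: the paper threads its inner induction through the normalization procedure $\norm'$ and then a separate induction on the depth of recognized terms, whereas you run a single structural induction on the recognized term $v$ and treat the canonical term of $q'$ after normalization as a routine verification --- which is arguably cleaner but not a different argument.
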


\begin{proof}[Proof sketch]
  
  The proof is done by induction on the number of  completion steps
  to reach the post-fixpoint $\aaex^*$ : we are going to show that
  if $\aaex^i$ respects the property of lemma~\ref{lem:correct},
  then $\aaex^{i+1}$ also does.
  
  The initial $\aaex^0$ respects the expected property~: we consider
  any state $q$ and a canonical term $t$ of $q$: since no completion
  step was done, $\aaex^0$ has no $\varepsilon$-transitions. It means
  that for all term $t'\rwne q$. Thanks to the property
  \ref{prop:wellinitial}, we have $t = t'$ and obviously $t \rw^*_\R
  t'$.

  Now, we consider the normalization of a transition of the form $r\sigma \rwne q'$
  such that $l\sigma \rw^*_{\aaex^i} q$ with $\Delta$ the ground transition set and $\Deps$ the $\varepsilon$-transition set of $\aaex^i$.
  We show that the property is true for all new states (including $q'$). 
  Then, in a second time, we will show that it is true for state $q$,
  if we add the second transition of completion: $q'\rw q$. 

  Let us focus on the normalization of $\norm'(r\sigma \rw q'\sep \Delta)$ where for
  any existing state $q$ and for all $u\ v \in \TF$ such that $v \rw_{\Delta\cup\Deps} q$ and $u \rw_{\Delta} q$, we have $u \rw_\R^* v$.
  We show that for all $t \in \TFQ$, if we have $\Delta' = \norm'(t \rw q'\sep \Delta)$, for all $u\ v \in \TF$ such that $v \rw_{\Delta'\cup\Deps} q'$ and $u \rw_{\Delta'} q$, we have $u \rw_\R^* v$. 
  The induction is done on the 
  number of symbols of $\F$ used to build $t$.

  First case $\norm'(t \rw q \sep \Delta)$ where $t = f(q_1,\dots, q_n)$ : we define $\Delta'$ by adding the transition $f(q_1, \dots, q_n) \rw q$
  to $\Delta$, where $q$ is a new state. Then, for all substitutions $\sigma' : \Q \mapsto \TF$ such that $t\sigma' \rw_{\Delta\cup\Deps} q$, and all 
  substitutions $\sigma'' : \Q \mapsto \TF$ such that $t\sigma'' \rw_{\Delta'} q$ we aim at proving that $t\sigma''
  \rw_\R^* t\sigma'$. Since each state $q_i$
  is already defined, using the hypothesis on $\Delta$ we deduce that $\sigma''(q_i) \rw^*_\R \sigma'(q_i)$. This implies that $t\sigma'' \rw_\R^* t\sigma'$, the property 
  also holds for $\Delta'$.

  Second case $\norm'(t \rw q \sep \Delta)$ where $t = f(t_1,\ldots,t_n)$: we select $t_i$ a subterm of $t$, obviously the number
  of symbols is strictly lower to the number of symbols of $t$.
  By induction, for the normalization of $\norm'(t_i \rw q_i\sep \Delta)$ we have a new 
  set $\Delta'$ that respects the expected property. Then, we normalize $t$ into $t' = f(t'_1, \dots, q_i, \dots, t'_n)$, 
  the term obtained after rewriting with $\Delta'$ thanks to $\norm$. Since $t_i \not\in \Q$, the number of
  symbols of $\F$ in $t' = f(t_1, \dots, q_i, \dots, t_n)$ is strictly smaller than the number of symbols of $ \F$ in $t$. Note 
  that rewriting $t'$ with $\Delta'$ can only decrease the number of symbols of $\F$ in $t'$.
  Since $t'$ has a decreasing number of symbols and $\Delta'$ respects the property we can deduce by induction
  that we have $\Delta'' = \norm'(t'\rw q\sep \Delta')$ such that for all $v \rw_{\Delta''\cup\Deps} q'$ and $u \rw_{\Delta''} q$, $u \rw_\R^* v$.
  
  So, we conclude that the normalization $\norm'(r\sigma \rw q'\sep \Delta)$ computes $\Delta'$ the set of ground transitions for $\aaex^{i+1}$.
  For all terms $u$ $v$ such that $u \rw_{\Delta'\cup\Deps} q'$ and $u \rw_{\Delta'} q'$ we have $u \rw_\R^* v$. 

  Now, let us consider the second added transition $q' \rw q$ to $\Deps$, all canonical terms
  $r\sigma''$ of $q'$, and all terms $l\sigma''' \in
  \Lang{}(\aaex^i, q)$ such that $l\sigma''' \rw_\R r\sigma'''$ and
  $r\sigma''' = r\sigma''$.  By hypothesis on $\aaex^i$, we know that every canonical term $u$ of $q$
  we have $u \rw_\R^*
  l\sigma'''$. By transitivity, we have $u \rw_\R^* r\sigma''$.  The
  last step consists in proving that for all terms of all states of
  $\aaex^{i+1}$, the property holds: this can be done by induction on
  the depth of the recognized terms.
\end{proof}

The theorem \ref{thm:correct} is shown by considering the introduction of the
transition $q' \rw q$. By construction, there exists a substitution $\sigma : \X \mapsto \Q$ and a rule
$l \rw r \in \R$ such that we have $l\sigma \rw^*_{\aaex^*} q$ and $r\sigma \rwne_{\aaex^*} q'$. We consider all substitution  
$\sigma' : \X \mapsto \TF$ such that for each variable $x \in \vars(l)$, $\sigma'(x)$ is a canonical term
of the state $\sigma(x)$. Obviously, using the result of the lemma \ref{lem:correct},
for all canonical term $u$ of $q$ we have $u \rw^*_\R l\sigma'$. Since the last step of rewriting 
in the reduction $l\sigma \rw^*_{\aaex^*} q$ is not a $\varepsilon$-transition, we also deduce that $l\sigma'$ is not produced
by a rewriting at the top position of $u$ whereas it is the case for $r\sigma'$ and we have $u \arw_{\R} r\sigma'$.

\begin{theorem}[Completeness]
  Let $\aaex^*$ be a complete tree automaton, 
  $q,q'$ states of $\aaex^*$ and $u,v \in \TF$ such that $u$ is a canonical term of $q$
  and $v$ is a canonical term of $q'$. If $u \arw_\R v$ then there exists a $\varepsilon$-transition $q' \rw q$ in $\aaex^*$.
\end{theorem}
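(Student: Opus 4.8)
The plan is to prove the converse of the Correctness theorem by pinning down the exact critical pair whose resolution during completion must have installed the transition $q' \rw q$. First I would unfold the hypothesis: by the definition of $\arw_\R$ there is a term $w$ with $u \rw^*_\R w$ containing no rewrite at position $\lambda$, followed by a single top step $w \trw_\R v$ that uses a rule $l \rw r \in \R$; thus $w = l\rho$ and $v = r\rho$ for some ground substitution $\rho$.

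Next I would recognise $w$ in the state $q$ with a non-$\varepsilon$ last step. Since $u \rwne_{\aaex^*} q$ and the reduction $u \rw^*_\R w$ never fires at $\lambda$, the two terms share their top symbol; writing $u = f(u_1, \dots, u_n)$ and $w = f(w_1, \dots, w_n)$ with $u_i \rw^*_\R w_i$, and using that $\aaex^*$ is a fixpoint whose state languages are closed under $\rw_\R$ (each recorded subterm rewrite keeps the subterm in the same state, via the $\varepsilon$-transition the completion created for it), every $w_i$ is still recognised in the state $p_i$ that recognises $u_i$. Hence $w \rw^*_{\aaex^*} f(p_1,\dots,p_n) \rw q$, and the final transition $f(p_1,\dots,p_n) \rw q$ is a ground one.

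From this run I would read off a state substitution $\sigma : \X \mapsto \Q$, taking $\sigma(x)$ to be the state reached by $\rho(x)$, and choosing the run so that for each $x \in \var(r)$ the subterm $\rho(x)$ is reduced canonically, $\rho(x) \rwne \sigma(x)$, in agreement with the canonical run $r\rho \rwne q'$ of $v$. Then $l\sigma \rw^*_{\aaex^*} q$ with a non-$\varepsilon$ last step and $l\sigma \rw_\R r\sigma$, so $(l\sigma, q)$ is a genuine critical pair; moreover $r\sigma \rwne q'$, because $v = r\rho \rwne q'$ factors through $r\sigma$. Since $\aaex^*$ is complete this critical pair is already solved, yielding a state $q''$ with $r\sigma \rwne q''$ and $q'' \rw q$ an $\varepsilon$-transition of $\aaex^*$. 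Determinism of $\rwne$ (Property~\ref{prop:deterministic}) applied to $r\sigma \rwne q'$ and $r\sigma \rwne q''$ forces $q'' = q'$, so that $q' \rw q$ is an $\varepsilon$-transition of $\aaex^*$, as required.

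The hard part is the choice of run in the third step, namely guaranteeing that the reduction of $w$ to $q$ sends every variable-subterm $\rho(x)$ (for $x \in \var(r)$) to exactly the canonical state used in $r\rho \rwne q'$. This is delicate because $\aaex^*$ is $\varepsilon$-nondeterministic: a ground subterm may be accepted in several states at once, and the run of $w$ inherited from $u$ through the subterm rewrites may carry $\rho(x)$ to a \emph{non}-canonical state, so that a priori the reduct-state manufactured by the critical pair need not coincide with the canonical state $q'$ of $v$. Closing this gap is the crux of the argument, and it is here that Properties~\ref{prop:deterministic} and~\ref{prop:wellinitial}, together with the invariant that the completion introduces one $\varepsilon$-transition for each recorded subterm rewrite, must be combined to show that the canonical state of $\rho(x)$ can replace the run state without losing the reduction to $q$ — equivalently, that the reduct-state of the critical pair is forced to equal $q'$.
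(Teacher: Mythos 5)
Your proposal follows essentially the same route as the paper's own proof sketch: unfold $\arw_\R$ to expose the top rewrite step $w = l\rho \rw_\R r\rho = v$, use closure of the complete automaton under rewriting to recognise $w$ in $q$, extract a state substitution giving a critical pair that must already be solved, and invoke Property~\ref{prop:deterministic} to identify the reduct state $q''$ with $q'$. The "hard part" you flag at the end — ensuring the run of $w$ into $q$ sends each $\rho(x)$ to its canonical state so that determinism really forces $q''=q'$ — is a genuine subtlety that the paper's sketch silently glosses over as well, so your more explicit acknowledgement of it is, if anything, an improvement rather than a divergence.
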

\begin{proof}[Proof sketch]
  By definition of $u \arw_\R v$ there exists a term $w$ such that $u \rw_\R^* w$ and
  and there exists a rule $l \rw r \in \R$ and a substitution $\sigma : \X \mapsto \TF$ such that 
  $w = l\sigma$ and $v = r\sigma$.
  Since $\aaex^*$ is a complete tree automaton, it is closed by rewriting. This means 
  that any term obtained by rewriting any term of $\Lang{}(\aaex^*, q)$ is also in $\Lang{}(\aaex^*, q)$. This
  property is true in particular for the terms $u$ and $w$. 
  Since $w$ is rewritten in $q$ by transitions of $\aaex^*$, we can define
  a second substitution $\sigma' : \X \mapsto \Q$ such that $l\sigma \rw^*_{\aaex^*} l\sigma' \rw^*_{\aaex^*} q$.
  Using again the closure property of $\aaex^*$, we know that the critical pair $l\sigma' \rw_\R r\sigma'$
  and $l\sigma' \rw^*_{\aaex^*} q$ is solved by adding the transitions $r\sigma' \rwne_{\aaex^*} q''$ and $q'' \rw q$. Since the property \ref{prop:deterministic}
  is preserved by completion steps, we can deduce that $q'' = q'$ which means $q' \rw q$.
\end{proof}

\begin{example}
  \label{the_example}
  To illustrate this result, we give a completed tree automaton for a small
  TRS. We define $\R$ as the union of the two sets of rules
  $\R_1 = \{ a \rw b,\; b\rw c \}$ and $\R_2 = \{f(c) \rw g(a),\; g(c) \rw h(a),\; h(c) \rw f(a)\}$. We define
  initial set $E=\{f(a)\}$.
  We obtain the following tree automaton fixpoint :
  {\small
  \[\aaex^*= \left\la
  \Qf = \{q_f\},\quad
  \Delta = \left\{
    \begin{array}{rcl}
      a & \rw & q_a \\
      b & \rw & q_b \\
      c & \rw & q_c \\
      f(q_a) & \rw & q_f \\
      g(q_a) & \rw & q_g \\
      h(q_a) & \rw & q_h \\
    \end{array}
  \right\}\:
  \Deps= \left\{
    \begin{array}{rcl}
      q_b & \rw & q_a \\
      q_c & \rw & q_b \\
      q_g & \rw & q_f \\
      q_h & \rw & q_g \\
      q_f & \rw & q_h \\
    \end{array}\right\}
  \:
\right\ra
  \]
  }
  
  If we consider the transition $q_h \rw q_g$, and its canonical terms $h(a)$
  and $g(a)$ respectively, we can deduce $g(a) \arw_\R h(a)$. This is obviously an
  abstraction since we have $g(a) \rw_\R^1 g(b) \rw_\R^1 g(c) \rw_\R^\lambda h(a)$.
\end{example}

In the following, we use the notation $\arw_{\R_i}$ to specify the
relation for a relevant subset $\R_i$ of $\R$. For instance, $u
\arw_{\R_i} v$ 
denotes that there exists $w$ such that $u \rw_\R^* w$ with no rewriting
at the $\lambda$ position of $u$ and $w
\rw_{\R_i}^\lambda v$. In example~\ref{the_example},
we can say that $g(a)
\arw_{\R_2} h(a)$.

\section{From Tree Automaton to Kripke Structure}
Let $\aaex^*= \la \T(\F), \Q, \Q_F, \Delta \cup \Deps \ra$ be a complete tree
automaton, for a given TRS $\R$ and an initial language recognized by $\A$. A
Kripke structure is a four tuple $\K = (S, S_0, R, L)$ where $S$ is a set of
states, $S_0 \subseteq S$ initial states, $R \subseteq S \times S$ a left-total
transition relation and $L$ a function that labels each state with a set of
predicates which are true in that state. In our case, the set of true predicates
is a regular set of terms.

\begin{definition}[Labelling Function]
  Let $\A_P = \la \T(\F), \Q, \Delta \ra$ be the structure defined from
  $\aaex^*$ by removing $\varepsilon$-transitions and final states.
  We define the labelling function $L : q \mapsto \la \TF, \Q, \{q\}, \Delta\ra$ as the function which associates
  to a state $q$ the automaton $\A_P$ where $q$ is the unique final state. We obviously have
  the property for all state state $q$ :
  \[\forall t \in \Lang{}(L(q)), \quad t \rwne_{\aaex^*} q\]
\end{definition}
Now, we can build the Kripke structure for the subset $\R_i$ of $\R$ on which
we want to prove some temporal properties.

\begin{definition}[Construction of a Kripke Structure]
  We build the 4-tuple $(S, S_0, R, L)$ from a tree automaton such
  that we have $S = Q$, $S_0 \subseteq S$ is a set of initial states,
  $R(q, q')$ if $q' \rw q \in \Deps$ and the labelling function $L$ as
  just defined previously.
\end{definition}

Kripke structures must have a complete relation $R$. For any state $q$
whose have no successor by $R$, we had a loop such that $R(q, q)$ holds. Note
that this is a classical transformation of Kripke structures~\cite{ClarkeGP}.
A Kripke structure is parametrized by the set $S_0$. It defines which connected 
component of $R$ we are interested to analyze. For instance, to analyze 
the abstract rewriting at the top position of terms in $\Lang{}(\aaex^*)$, we define
set $S_0 = \Q_F$ (the set of final states of $\aaex^*$), since all canonical
terms of final states are initial terms. 
For all abstract rewriting at a deeper position $p$, we need to define 
a set $Sub$ of initial subterms considered as the beginning of the rewriting
at the position $p$. Then the set $S_0$ will be defined as 
$S_0 = \{q \sep \exists t \in Sub,\; t \rwne_{\aaex^*} q\}$.

Kripke structure models exactly the abstract rewriting
relation $\arw_{\R_i}^*$ for the corresponding subset $\R_i \subseteq \R$.

\begin{theorem}
  Le be $\K=(S, S_0, R, L)$ a Kripke structure built from $\aaex^*$.
  For any states $s$, $s'$ such that $R(s, s')$ holds, there exists two
  terms $u \in L(s)$ and $v \in L(s')$ such that $u \arw_{\R_i} v$.
\end{theorem}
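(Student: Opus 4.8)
The plan is to unpack the definition of the Kripke structure's transition relation and connect it back to the $\varepsilon$-transitions via the Correctness Theorem (Theorem~\ref{thm:correct}). By the construction of $\K$, the relation $R(s,s')$ holds precisely when there is an $\varepsilon$-transition $s' \rw s \in \Deps$ in $\aaex^*$. So the statement reduces to finding canonical terms on both sides related by $\arw_{\R_i}$, which is exactly the content guaranteed by the completion construction.

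First I would observe that since $\aaex^*$ is a complete tree automaton and $s' \rw s$ is one of its $\varepsilon$-transitions, I may invoke Theorem~\ref{thm:correct} directly. That theorem tells me that for the canonical terms $u$ of $s$ and $v$ of $s'$, we have the relation $u \arw_{\R} v$ (following the diagram, with rewriting allowed in subterms but with the final step at the top position). The labelling function $L$ is defined so that $L(s)$ recognizes exactly the terms $t$ with $t \rwne_{\aaex^*} s$; in particular the canonical term $u$ of $s$ lies in $L(s)$, and likewise $v \in L(s')$. Thus the witnessing terms $u,v$ required by the statement are precisely these canonical terms, and they exist because each state of $\aaex^*$ satisfying Property~\ref{prop:wellinitial} has a canonical term (by the normalization producing a unique configuration $t \rwne s$).

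The one genuine subtlety is the refinement from $\arw_{\R}$ to $\arw_{\R_i}$. The discussion following Example~\ref{the_example} explains that each $\varepsilon$-transition introduced during completion records a single top-level rewriting step using some specific rule $l \rw r$, and that $\arw_{\R_i}$ restricts attention to the relevant subset $\R_i$ containing that rule. So I would argue that the transition $s' \rw s$ was created by solving a critical pair for a rule $l \rw r \in \R_i$, giving $u \rw_\R^* w$ with no rewriting at position $\lambda$ followed by $w \rw_{\R_i}^\lambda v$; this is exactly $u \arw_{\R_i} v$ by the parametrized definition.

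The main obstacle I anticipate is mostly bookkeeping rather than depth: I must make sure the \emph{direction} of the arrows is consistent, since the $\varepsilon$-transition points $s' \rw s$ (target to source) while the abstract rewriting goes $u \arw_{\R_i} v$ with $u \in L(s)$ and $v \in L(s')$ — i.e. the canonical term of $s$ rewrites toward the canonical term of $s'$, matching the duality noted before Definition~\ref{def:normalized} of $\arw$. I would verify this orientation carefully against Theorem~\ref{thm:correct}'s diagram, and confirm that the existence of at least one canonical term per state (guaranteed by Properties~\ref{prop:deterministic} and~\ref{prop:wellinitial}) makes the existential claim well-posed. With those checks in place, the result follows immediately from the Correctness Theorem.
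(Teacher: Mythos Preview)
Your proposal is correct and follows essentially the same route as the paper: the paper's own proof simply states that the result is a direct consequence of Theorem~\ref{thm:correct} applied to the relation $R$ of the Kripke structure. Your additional bookkeeping about the orientation of the $\varepsilon$-transition, the membership of canonical terms in $L(s)$ and $L(s')$, and the refinement to $\arw_{\R_i}$ just makes explicit what the paper leaves implicit.
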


\begin{proof}
  Here, the proof is quite trivial. It is a consequence of the theorem \ref{thm:correct} which can be
  applied on the relation $R$ of the Kripke structure.
\end{proof}

In Example~\ref{the_example}, if we want to verify properties of $\R_1$
or $\R_2$, we need to consider a different subset of $\Deps$ corresponding
to the abstraction of the relation rewriting $\arw_{\R_i}$.
Figures~\ref{fig2}~and~\ref{fig3} show the Kripke structures corresponding to those
abstractions. Note that in figure~\ref{fig2}, a loop is needed on state $c$ to have a total relation
for $\K_1$.

\begin{figure}[!ht]
  \begin{minipage}{0.5\linewidth}
    \centering
    \begin{tikzpicture}[thick, initial text=]
      \tikzstyle{every node}=[font=\tiny]
      \tikzstyle{every state}=[minimum size=.8cm]
      \tikzstyle{accepting}=[accepting by double]
      \node [initial,state] (a) at (0, 0) {$q_a$}; 
      \node [state] (b) at (2, 0) {$q_b$};
      \node [state] (c) at (4, 0) {$q_c$};
      \node [] (f) at (0, 1.8) {};
      \draw[->] (a) edge (b) (b) edge (c) (c) [loop above] edge (c);
    \end{tikzpicture}
    \caption{\label{fig2}\footnotesize Kripke structure $\K_1$ for $\arw_{\R_1}$}
  \end{minipage}
  \begin{minipage}{0.5\linewidth}
    \begin{center}
      \begin{tikzpicture}[thick, initial text=]
        \tikzstyle{every node}=[font=\tiny]
        \tikzstyle{every state}=[minimum size=.1cm]
        \tikzstyle{accepting}=[accepting by double]
        \node [initial,state] (a) at (0, 0) {$q_f$}; 
        \node [state] (b) at (2, 0) {$q_g$}; 
        \node [state] (c) at (1, 1.5) {$q_h$}; 
        \draw[->] (a) edge (b) (b) edge (c) (c) edge (a);
      \end{tikzpicture}
      \caption{\label{fig3}\footnotesize Kripke structure $\K_2$ for $\arw_{\R_2}$}
    \end{center}
  \end{minipage}
\end{figure}

The set $S_0$ of initial states depends of the abstract rewriting relation selected.
For example, if we want to analyze $\arw_{\R_2}$ (or $\arw_{\R_1}$), we define $S_0=\{q_f\}$ (resp. 
$S_0 = \{q_a\}$).

\section{Verification of R-LTL properties}
To express our properties, we propose to define the Regular Linear
Temporal Logic (R-LTL). R-LTL is LTL where predicates are replaced by a tree
automaton. The language of such a tree automaton
characterizes a set of admissible terms. A state $q$ of a Kripke
structure validates the atomic property $P$ characterized by a tree automaton $\A_P$
if and only if one term recognized by $L(q)$ must be recognized by $\A_P$ to satisfy the
property. More formally:

\[\K(Q,\ Q_F,\ R,\ L),\ q \models P\quad \equ\quad \Lang{}(L(q)) \cap \Lang{}(\A_P) \neq \emptyset\]

We also add the operators ($\land$, $\lor$, $\neg$, $\nxt$, $\fut$, $\gbl$, $\unt$, $\rel$)
with their standard semantics as in LTL to keep the expressiveness
of the temporal logic. More information about these operators can
be found in~\cite{ClarkeGP}. Note that temporal properties do not range over the 
rewriting relation $\rw_\R$ but over its abstraction $\arw_\R$.
It means that the semantics of the temporal operators has to be interpreted
w.r.t. this specific relation. For example, the formula $\gbl(\{f(a)\} \imp \nxt \{g(a)\})$
on $\K_2$ (for more clarity, we note predicates as sets of terms): the formula 
has to be interpreted as : for all $q$ $q'$, if $\K_2,\ q \models \{f(a)\}$ and $R(q, q')$ then
we have  $\K_2,\ q' \models \{g(a)\}$. In the rewriting interpretation the only term $u$ such
that $f(a) \arw_{\R_2} u$ is $u = g(a)$.

We use the Büchi automata framework to perform model checking. A survey of this
technique can be found in the chapter 9 of~\cite{ClarkeGP}.  LTL (or R-LTL)
formulas and Kripke structures can be translated into Büchi automata. We
construct two Büchi automata : $\B_\K$ obtained from the Kripke structure and
$\B_L$ defined by the LTL formula. Since the set of behaviors of the Kripke
structure is the language of the automaton $\B_\K$, the Kripke structure
satisfies the R-LTL formula if all its behaviors are recognized by the automaton
$\B_L$. It means checking $\Lang{}(\B_\K) \subseteq \Lang{}(\B_L)$. For this
purpose, we construct the automaton $\overline{\B_L}$ that recognizes the
language $\overline{\Lang{}(\B_L)}$ and we check the emptiness of the automaton
$\B_\cap$ that accepts the intersection of languages $\Lang{}(\B_K)$ and
$\overline{\Lang{}(\B_L)}$. If this intersection is empty, the term rewriting
system satisfies the property. This is the standard model-checking technique.

$\B_\M$ and $\B_\K$ are classically defined as 5-tuples: alphabet, states,
initial states, final states and transition relation.
Generally, the alphabet of Büchi automata is a set of predicates.
Since we use here tree automata to define predicates, the alphabet of
$\B_\K$ and $\B_L$ is $\Sigma$ the set of tree automata that can be defined over $\TF$. 
Actually, a set of behaviors is a word
 which describes a sequence of states: if $\pi=s_0s_1s_2s_3\dots$ denotes
a valid sequence of states in the Kripke structure, then the word
$\pi' = L(s_0)L(s_1)L(s_2)\dots$ is recognized by $\B_\K$. The algorithms
used to build $\B_\M$ and $\B_\K$ can be found in~\cite{ClarkeGP}.

The automaton intersection $\B_\cap$ is obtained by computing the product of $\B_\K$ by $\overline{\B_L}$.
By construction all states of $\B_\K$ have to be final. Intuitively any infinite path
over the Kripke structure must be recognized by $\B_\K$. This case allows to use a 
simpler version of the general Büchi automata product.
\begin{definition}[$\B_\K \times \overline{\B_L}$]
  The product of $\B_\K = \la \Sigma,\; \Q,\; Q_i,\; \Delta,\; \Q\ra$ by $\overline{\B_L} = \la\Sigma,\; \Q',\;\Q'_i,\; \Delta',\; F\ra$ is defined as
  \[\la \Sigma,\; \Q \times \Q',\; \Q_i \times \Q'_i,\; \Delta_\times,\;  \Q \times F \ra \]
  where $\Delta_\times$ is the set of transitions $(q_\K, q_L)
  \stackrel{(\A_\K,\A_L)}{\lrw} (q'_\K, q'_L)$ such that $q_\K
  \stackrel{\A_\K}{\lrw} q'_\K$ is a
  transition of $\B_\K$ and $q_L
  \stackrel{\A_L}{\lrw} q'_L$ is a transition of $\overline{\B_L}$. Moreover, the transition 
  is only valid if the intersection between the languages of $\A_\K$ and $\A_L$ is non
  empty as expected by the satisfiability of the R-LTL atomic formula.
\end{definition}

Finally the emptiness of the language $\Lang{}(\B_\cap)$ can be checked using the standard algorithm
based on depth first search to check if final states are reachable.
\begin{example}
To illustrate the approach, we propose to check the formula $P =
\gbl(\{f(a)\} \imp \nxt \{g(a)\})$
on example~\ref{the_example}. The automaton $\overline{\B_L}$ (fig.~\ref{fig4}) recognizes the negation of the formula $P$
expressed as $\fut(\{f(a)\} \land \nxt \neg\{g(a)\})$ and $\B_\K$ (fig.~\ref{fig5}) recognizes the all behaviors of the Kripke structure
$\K_2$~(fig.~\ref{fig3}). The notation $\A_\alpha$ denotes the tree automaton such that its language
is described by $\alpha$ ($\A_{\neg g(a)}$ recognizes the complement of the language $\Lang{}(\A_{g(a)})$ and $\A_*$ recognizes all
term in $\TF$). Figure~\ref{fig6} shows the result of intersection $\B_\cap$ between $\B_\K$ and $\overline{\B_L}$. Only reachable
states and valid transitions (labeled by non empty tree automata intersection) are showed. Since no 
reachable states of $\B_\cap$ are final, its language is empty. It means that all behaviors of $\K_2$ satisfy $P$ : the
only successor of $f(a)$ for the relation $\arw_{\R_2}$ is $g(a)$.
\begin{figure}[!ht]
  \begin{minipage}{0.5\linewidth}
    \centering
    \begin{tikzpicture}[scale=.8,thick,initial text=]
      \tikzstyle{every node}=[font=\tiny]
      \tikzstyle{every state}=[minimum size=.1cm]
      \tikzstyle{accepting}=[accepting by double]
      \node [initial,state] (q1) at (0, 0) {$1$}; 
      \node [state] (q2) at (2, 0) {$2$};
      \node [accepting, state] (q3) at (4, 0) {$3$};
      \path[->]
      (q1)  edge [loop above] node {$\A_*$} (q1) 
      edge node [above] {$\A_{f(a)}$} (q2)
      (q2)  edge node [above] {$\overline{\A_{g(a)}}$} (q3)
      (q3)  edge [loop above] node {$\A_*$} (q3);
    \end{tikzpicture}
    \caption{\footnotesize Automaton $\overline{\B_L}$}
    \label{fig4}

    \begin{tikzpicture}[scale=.8,thick,initial text=]
      \tikzstyle{every node}=[node distance=40,font=\tiny]
      \tikzstyle{accepting}=[accepting by double]
      \tikzstyle{every state}=[accepting, minimum size=.1cm]
      \node [initial,state] (q4)            {$4$}; 
      \node [state] (q5) [right of=q4]      {$5$};
      \node [state] (q6) [below of=q5]      {$6$};
      \node [state] (q7) [right of=q6]      {$7$};
      \path[->]
      (q4)  edge []                         node [above]    {$L(q_f)$} (q5) 
      (q5)  edge []                         node [left]     {$L(q_g)$} (q6)
      (q6)  edge []                         node [below]    {$L(q_g)$} (q7)
      (q7)  edge [bend angle=40,bend right] node [right]    {$L(q_g)$} (q5);
    \end{tikzpicture}
    \caption{\footnotesize Automaton $\B_{\K}$}
    \label{fig5}
  \end{minipage}
  \begin{minipage}{0.5\linewidth}
    \centering
    \vspace{10mm}
    \begin{tikzpicture}[scale=.8,thick,initial text=,bend angle=40]
      \tikzstyle{every node}=[node distance=50,font=\tiny]
      \tikzstyle{accepting}=[accepting by double]
      \tikzstyle{every state}=[minimum size=.1cm]
      \node [initial,state] (q14)            {$1,4$}; 
      \node [state] (q15) [right of=q14, node distance=60]     {$1,5$};
      \node [state] (q16) [below of=q15]     {$1,6$};
      \node [state] (q17) [right of=q16, node distance=60]     {$1,7$};
      \node [state] (q25) [below of=q14]     {$2,5$};
      \path[->]
      (q14) edge []                 node [above]           {$\A_* \cap L(q_f)$}       (q15)
            edge []                 node [left]            {$\A_{f(a)}\cap L(q_f)$}    (q25)
      (q15) edge []                 node [left]            {$\A_* \cap L(q_g)$}       (q16)
      (q16) edge []                 node [above]           {$\A_* \cap L(q_h)$}       (q17)
      (q17) edge [bend right]       node [right]           {$\A_* \cap L(q_f)$}       (q15)
            edge [bend left]        node [below]           {$\A_{f(a)} \cap L(q_g)$}   (q25);
    \end{tikzpicture}
    \caption{\footnotesize Automaton $\B_\cap$}
    \label{fig6}
  \end{minipage}
\end{figure}

\end{example}

\section{Conclusion, Discussion}

In this paper, we show how to improve the tree automata completion mechanism to
keep the ordering between reachable terms. This ordering was lost in the
original algorithm~\cite{FeuilladeGVTT-JAR04}. Another contribution is the
mechanism making it possible to prove LTL-like temporal properties on such
abstractions of sets of reachable terms. The work presented here only deals with
finite state systems and exact tree automata completion results. Future plans
are to extend this result so as to prove temporal properties on
over-approximations of infinite state systems. A similar objective has already
been tackled in~\cite{MeseguerPM-TCS08}. However, this was done in a pure
rewriting framework where abstractions are more heavily constrained than in tree
automata completion~\cite{FeuilladeGVTT-JAR04}. Hence, by extending LTL formula
checking on tree automata over-approximations, we hope to ease the verification
of temporal formula on infinite state systems.


\section*{Acknowledgements}

Many thanks to Axel Legay and Vlad Rusu for fruitful discussions on this
work and to anonymous referees for their comments.

\bibliographystyle{eptcs} 



\end{document}